\newtheorem{theorem}{Theorem}
\newtheorem{corollary}{Corollary}
\title{An Efficient Approach to the Online Multi-Agent Path Finding Problem \\ by Using Sustainable Information }
\author{
    %Authors
    % All authors must be in the same font size and format.
    \textsuperscript{\rm 1, \rm 3}Mingkai TANG,
    \textsuperscript{\rm 1, \rm 6}Boyi LIU,
    \textsuperscript{\rm 1, \rm 4}Yuanhang LI,
    \textsuperscript{\rm 1, \rm 2, \rm 5}Hongji LIU,
    \textsuperscript{\rm 1, \rm 2, \rm 7}Ming LIU,
    \textsuperscript{\rm 1, \rm 8}Lujia WANG\\
}
\title{My Publication Title --- Single Author}
\author {
    Author Name
}
\title{My Publication Title --- Multiple Authors}
\author {
    % Authors
    First Author Name,\textsuperscript{\rm 1}
    Second Author Name, \textsuperscript{\rm 2}
    Third Author Name \textsuperscript{\rm 1}
}
\begin{document}

\maketitle

\begin{abstract}
Multi-agent path finding (MAPF) is the problem of moving agents to the goal vertex without collision. In the online MAPF problem, new agents may be added to the environment at any time, and the current agents have no information about future agents. The inability of existing online methods to reuse previous planning contexts results in redundant computation and reduces algorithm efficiency. Hence, we propose a three-level approach to solve online MAPF utilizing sustainable information, which can decrease its redundant calculations. The high-level solver, the Sustainable Replan algorithm (SR), manages the planning context and simulates the environment. The middle-level solver, the Sustainable Conflict-Based Search algorithm (SCBS), builds a conflict tree and maintains the planning context. The low-level solver, the Sustainable Reverse Safe Interval Path Planning algorithm (SRSIPP), is an efficient single-agent solver that uses previous planning context to reduce duplicate calculations. Experiments show that our proposed method has significant improvement in terms of computational efficiency. In one of the test scenarios, our algorithm can be 1.48 times faster than SOTA on average under different agent number settings.
\end{abstract}

\section{Introduction}
% Introduce MAPF
The multi-agent path finding problem (MAPF) is finding paths for a set of agents to move from their starting vertex to the goal vertex without collision. MAPF has a wide practical application, such as aircraft towing
vehicles \cite{morris2016planning}, warehouse robots \cite{wurman2008coordinating}, video games \cite{ma2017feasibility} and urban road networks \cite{choudhury2022coordinated}.

% \begin{figure}[t]
% \centering
% \includegraphics[width=0.4\textwidth]{} % Reduce the figure size so that it is slightly narrower than the column.
% \caption{An example of the sub-region of a large-scale maze map. The planning in a sub-region can be seen as an online MAPF problem.}
% \label{fig:hierarchy}
% \end{figure}

% Introduce online MAPF and the difference with offline and online.
For MAPF, most works assume that the environment can be fully captured before the system runs \cite{salzman2020research, stern2019multi, ma2017overview}. Under this assumption, the solution can be calculated in advance, and the agent only needs to take actions along the pre-calculated plan at runtime. These path finding problems are referred to as offline MAPF. However, in practice, the assumption is not always guaranteed. During the running of a system, new agents might appear in the system suddenly, and agents need to do replanning to fit the new situation. Recently, the online MAPF problem  \cite{vsvancara2019online} was proposed. It is assumed that new agents may be added to the environment at any time, and the current agents have no information about future agents. All agents in the environment need to do replanning to fit the new situation.

% the value of online MAPF
Online MAPF is a problem of great practical importance. For example, in a real-world warehouse system, the robot frequently enters and exits the work area due to factors such as charging or malfunction. Moreover, the time point at which the robot re-enters the work area is unpredictable. The agents in the warehouse need to adjust their path due to the appearance of a new agent. 
% Another example is online MAPF can be seen as a subproblem in large-scale offline MAPF problem. 
% In the large-scale offline MAPF problem, the running time will increase sharply with a more significant number of agents and environment size, because optimally solving the MAPF problem is NP hard \cite{yu2013structure, ma2016multi}. The HMAPP \cite{zhang2021hierarchical} is a non-optimal but practical method that splits the whole environment into several sub-regions. The agents first plan at the level of sub-regions and then plan refined paths in sub-regions considering the interaction with other agents. From the sub-region perspective, it is an online MAPF problem, as new agents may come into the current sub-region at any time points. Figure \ref{fig:hierarchy} illustrates an example of the sub-region of a large-scale maze map and the planning in a sub-region.\hj{Why do you talk about online in sub-regions but here give an example of offline problem, not so closely connected.}

\begin{figure}[t]
\centering
\includegraphics[width=0.30\textwidth]{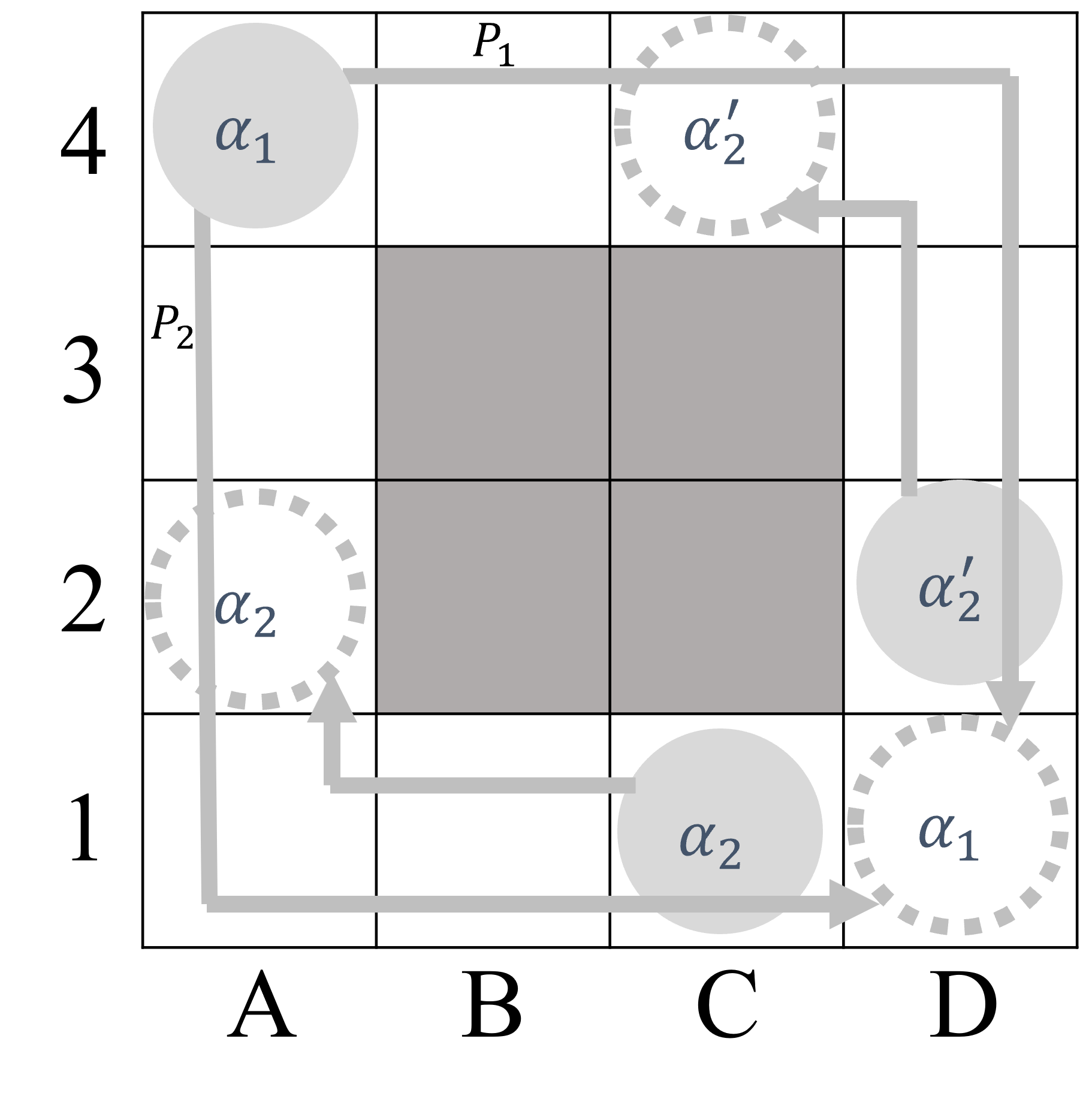} % Reduce the figure size so that it is slightly narrower than the column.
\caption{An example of two online MAPF instances. In both instances, $a_1$ appears $A4$ at time point $0$ and needs to travel to $D1$. $P_1$ and $P_2$ are two paths with equal costs for $a_1$. In the first instance, $a_2$ appears in $C1$ at time point $1$, and its goal is $A2$. In the second instance, $a_2'$ appears in $D2$ at time point $1$, and its goal is $C4$.}
\label{fig:online_mapf_instance}
\end{figure}

%the limitation of current method
Optimally solving the offline MAPF problem is NP-hard \cite{yu2013structure, ma2016multi}. Compared with the offline MAPF, which can calculate the paths before the system runs, the online MAPF further needs to calculate high-quality paths for all agents in real-time when new agents appear. \citeauthor{vsvancara2019online} proposed several methods to solve the online MAPF problem.
The Replan Single algorithm searches an optimal path for each new agent when they appear, while all paths of the old agents remain the same. The Replan Single Group algorithm jointly plans for all new agents appearing at the same time, and the new agents' path plan cannot affect the old agents' path plan. These two algorithms can execute fast, but the solutions are not optimal. The Replan All algorithm replan for all agents without considering existing path plans when new agents appear, and it can get high-quality solutions. However, each iteration of planning in the Replan All algorithm has a high computational complexity when the number of agents becomes large. 

%why is solveable
Considering reusing the information in previous planning iterations to reduce the running time, we propose an efficient algorithm. In this paper, we refer to this kind of information as sustainable information or planning context. 
Our method consists of three levels of algorithms.
We name the high-level algorithm as the Sustainable Replan algorithm(SR). It simulates the environment and maintains the whole planning context. The middle-level algorithm, the Sustainable Conflict-Based Search Algorithm (SCBS), is called by SR for searching the multi-agent path planning solution based on current information. SCBS uses the Sustainable Reverse Safe Interval Path Planning algorithm (SRSIPP), which is the low-level solver, for single-agent planning. Given that each planning iteration for a single agent has the same goal point, but different starting points, SRSIPP searches the path in the backward direction (from goal point to starting point) to reuse the previous planning information.

%contribution
The main contributions of this paper are as follows.
\begin{enumerate}
    \item We propose a three-level approach for reusing previous planning context to reduce the running time for the online MAPF.
    \item We prove the completeness and the snapshot optimality of our approaches.
    \item We performed detailed algorithm performance comparison experiments with SOTA. The average acceleration rate relative to the SOTA can reach up to 1.48.
\end{enumerate}

\section{Problem Definition}
The definition of the online multi-agent path finding problem is that given a directed graph $G(V, E)$, and a set of $k$ agents $a_1$, $a_2$, $a_3$ ... $a_k$, find a collision-free path for each agent. The agent $a_i$ is described by the triplet $(t^s_i, v^s_i, v^g_i)$, meaning agent $a_i$ appears in the starting vertex $v^s_i \in V$ in time point $t^s_i$ and its goal is the vertex $v^g_i \in V$. In this paper, we call agent $i$ \textit{starts} at $t^s_i$. Without loss of generality, we assume $0 \leq t^s_1 \leq t^s_2 \leq ... \leq t^s_k$. Specially, agents whose start time point is $0$ can be seen as agents already in the scene before the environment starts to run, and we refer to the planning for these agents as the \textit{offline} part of the online MAPF problem. In contrast, we refer to the planning after the system starts as the \textit{online} part. In the beginning, all agents plan their path from their own start vertex to the goal vertex, while they do not know any information about the agents that will start in the future. After that, agents follow their own plan at each time. When it comes to the time point when new agents start, all agents need to replan their paths considering the new input of the online MAPF problem. Let $m$ be the number of time points when new agents start, and $t^{new}_1$, $t^{new}_2$ ... $t^{new}_m$ be the corresponding time point sequence. $m$ might be smaller than $k$ because 
there may be more than one agent starting at the same time point. The solution to an online MAPF problem is defined as a sequence of valid plans $\Pi = \left \langle \pi^0, \pi^1, \pi^2 ... \pi^m \right \rangle$, where
  $\pi^j$ is a collection of all path plans at $t^{new}_j$.  Let  $p^j_{i}$ be the path plan of agent $i$ in $\pi^j$, and $p^j_i[t]$ be the vertex of agent $i$ in time point $t$ in $\pi^j$. We define $p^j_i[t_l:t_r]$ as the concatenation of the path plan of agent $i$ from time point $t_l$ to time point $t_r$, i.e. $p^j_i[u:v] = p^j_i[u] \circ p^j_i[u+1] \circ ... \circ p^j_i[v - 1]$, where $\circ$ is the concatenation operator. The execute plan of agent $i$ is defined as $Ex_i[\Pi] = p^1_i[t^{new}_1:t^{new}_2]   \circ  p^2_i[t^{new}_2:t^{new}_3] \circ ... \circ p^{m-1}_j[t^{new}_{n-1}:t^{new}_n] \circ p^{n}_j[t^{new}_n:\infty]$, showing the actual path of agent $i$.

In this paper, we focus on the variant: 
\begin{itemize}
\item  We assume that the agent starts in the garage, which means that the new agent can choose to enter the start vertex at the start time or later. Before they enter, they wait in the garage and do not conflict with other agents. In addition, we use the setting of disappearing at the goal vertex. Under these two assumptions, the problem is always solvable if the offline part is solvable and each agent has a path from its initial location to its goal location, as proved in Proposition 2 in \cite{vsvancara2019online}. Although we use these two assumptions, our proposed method can be easily extended to other assumptions at the start and goal. 
% We define $t^e_{i,j}$ and $t^g_{i,j}$ as the enter time and the goal reach time point of agent $i$ in $\Pi_j$.
\item  We only consider \textit{vertex conflict} and \textit{edge conflict}. Two agents collide iff they occupy the same vertex or cross the same edge in opposite directions at the same time. 
% Formally, agent $i_1$ and agent $i_2$ have a collision in time point $t$ in $\Pi_j$ if and only if:
% \begin{equation}
% 	\begin{split}
% 	(p^j_{i_1}[t] = & p^j_{i_2}[t]) \vee\\
% 	((  p^j_{i_1}&[t] =  p^j_{i_2}[t+1])  \wedge (p^j_{i_1}[t+1] = p^j_{i_2}[t]))
% 	\end{split}
% \end{equation}
% , where $t \in [t^e_{i_1,j}, t^g_{i_1,j}] \cap [t^e_{i_2,j}, t^g_{i_2,j}]$.
\end{itemize}
\begin{figure*}[t]
\centering
\includegraphics[width=0.75\textwidth]{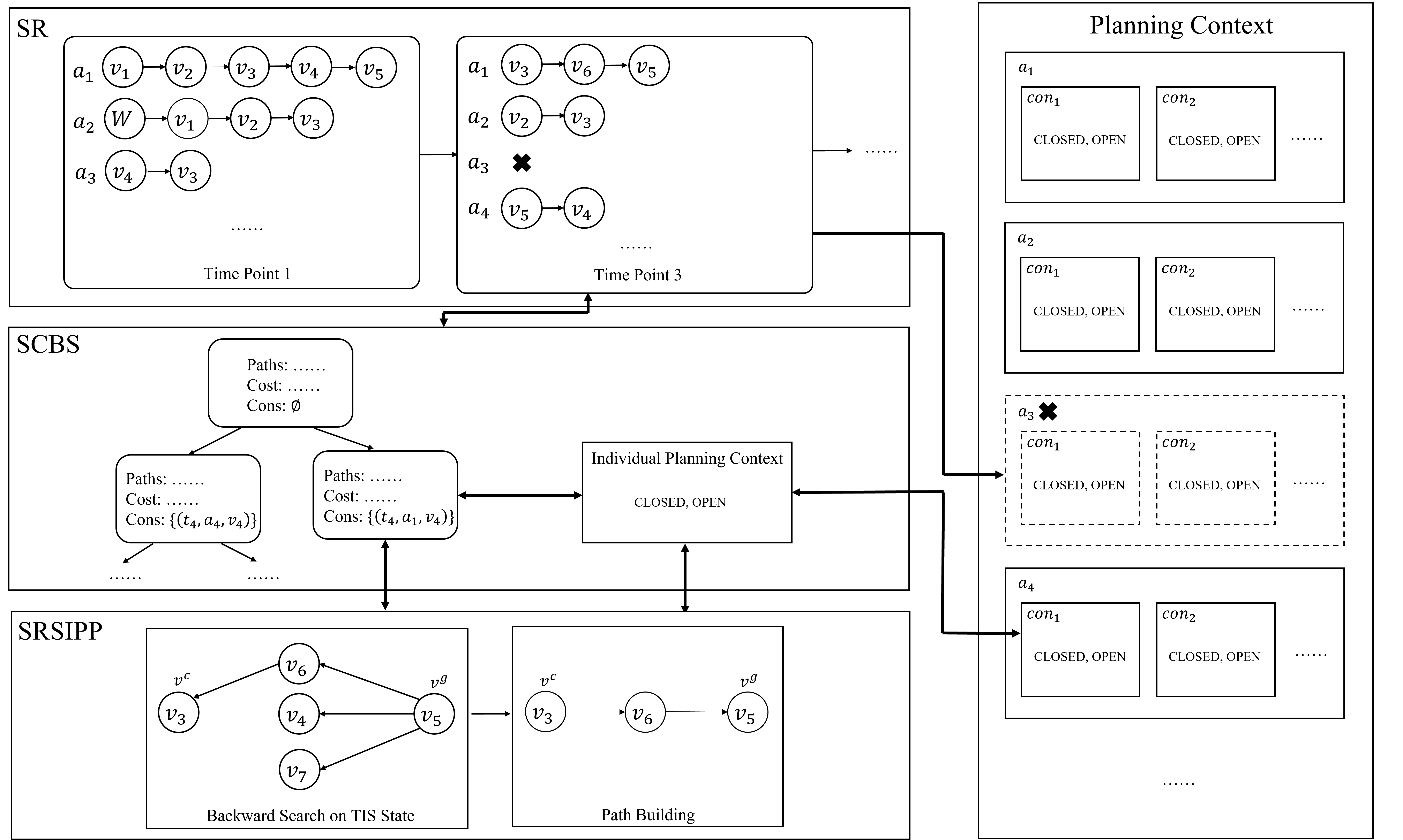} % Reduce the figure size so that it is slightly narrower than the column.
\caption{The architecture of the three-level approach. The SR algorithm is the high-level solver, simulating the environment and managing the planning context. The SCBS algorithm is the middle-level solver, which builds a conflict tree and extracts the individual planning context. The low-level solver, the SRSIPP algorithm, uses backward search on the TIS state for single-agent path planning.}
\label{fig:system}
\end{figure*}
Two objectives are commonly used for offline MAPF problems, minimizing \textit{makespan} and minimizing \textit{sum-of-cost (SOC)}. Makespan is the maximum complete time above all agents. However, minimizing makespan is improper for the online MAPF problem because new agents will continuously be added to the environment, and the later added agent will more likely affect the objective. SOC is the summation of the cost of all agents' path plans. However, two online MAPF solvers are not comparable in SOC directly because SOC cannot measure the exact quality of the two solvers. For example, two solvers $s_1$ and $s_2$ are used to solve the instances in Figure \ref{fig:online_mapf_instance}. At time point $0$, $a_1$ starts in $A4$. It has two paths with the same cost to its goal $D1$. Assume $s_1$ choose $P_1$ and $s_2$ choose $P_2$. Now considering for $s_1$, $a_1$ goes to $B4$ at time point $1$. In the first instance, $a_2$ appears, the path of $a_1$ will not be affected, and it will continuously follow the path [$B4$, $C4$, $D4$, $D3$, $D2$, $D1$] with length 6. However, in another instance, $a_2'$ appears, the path of $a_1$ will make a detour [$B4$, $A4$, $A3$, $A2$, $A1$, $B1$, $C1$, $D1$] with length $8$. The symmetrical situation will appear on $s_2$. We cannot say $s_1$ is better than $s_2$ or not because the actual cost depends on the future agents, which is unpredictable at early time points. Using SOC directly can not judge the quality of the solver. We define a solver as a \textit{snapshot optimal solver} if the solver can get optimal paths in terms of SOC, assuming no new agent will appear in the future. A  snapshot optimal solver is better than a non-optimal solver in solution quality.

\section{Methodology}
Our approach is a three-level method. Figure \ref{fig:system} shows the architecture of the method. The high-level solver is the Sustainable Replan algorithm (SR), which simulates the environment and maintains the planning context of all agents. The Sustainable Conflict-Based Search algorithm (SCBS), the middle-level solver, plans the optimal path for multi-agents and manages the planning context. The low-level solver, Sustainable Reverse Safe Interval Path Planning (SRSIPP), solves a single-agent problem under a set of constraints.

\subsection{Sustainable Replan Algorithm}
SR algorithm is the highest solver. It can simulate the scene and maintain the planning context sustainably. When one or more agents appear, the algorithm will do replanning for all agents. Figure \ref{fig:system} shows an example for the SR algorithm. The 'W' in the figure means the action is waiting in the garage.

We define $pc$ as the planning context, a two-level hash table, to save all the planning context. Its keys are the agent's id and all constraints on this agent, while its values can be determined by its lower-level solvers. We will describe the specific planning context in the later subsections.

The pseudo-code is shown in Algorithm \ref{alg:sr}. Let $t^c$ be the current time point, $v^c$ be the current vertex of the agent located, $A$ be the agent set that has started, $A^+$ be the new agent set appearing in time point $t^c$, and $Ex$ be the execution plan. In lines 1-8, the environment is simulated to time point $t^c$. If an agent $a$ reaches its goal before time point $t^c$, all related elements in $A$ and $pc$ will be removed. Otherwise, the current vertex $v^c$ of the agent $a$ will be obtained from the previous execute plan $Ex$. In line $10$, the SCBS algorithm calculates the optimal path plan $p$. In line $11$, the execution plan is updated by the $p$, deleting the path from time point $t^c$ and concatenating the new path plan to the execution plan. 
\begin{algorithm}[tb]
\caption{Sustainable Replanning}
\label{alg:sr}
\textbf{Input}: original agent set $A$, new agent set $A^+$, execute plan $Ex$, current time point $t^c$, planning context $pc$ 

\begin{algorithmic}[1] %[1] enables line numbers
\FOR{agent $a$ in $A$}
\IF{$a$ reach goal before time $t^c$}
\STATE $A$ $\leftarrow$ $A$ $\backslash$ \{$a$\}
\STATE Remove all planning context of $a$ in $pc$.
\ELSE
\STATE Update $a.v^c$ by $Ex$.
\ENDIF
\ENDFOR
\STATE  $A \leftarrow A \cup A^+$.
\STATE  $p, pc \leftarrow SCBS(A, t^c, pc)$    // Algorithm 2
\STATE  Update $Ex$ by $p$.
\RETURN $A$, $Ex$, $pc$
\end{algorithmic}
\end{algorithm}

\begin{figure*}[t]
\centering 
    \subfigure[Forward search on TS states.]{
       \centering
       \label{fig:forward_search_ts}
        \includegraphics[width=0.33\textwidth]{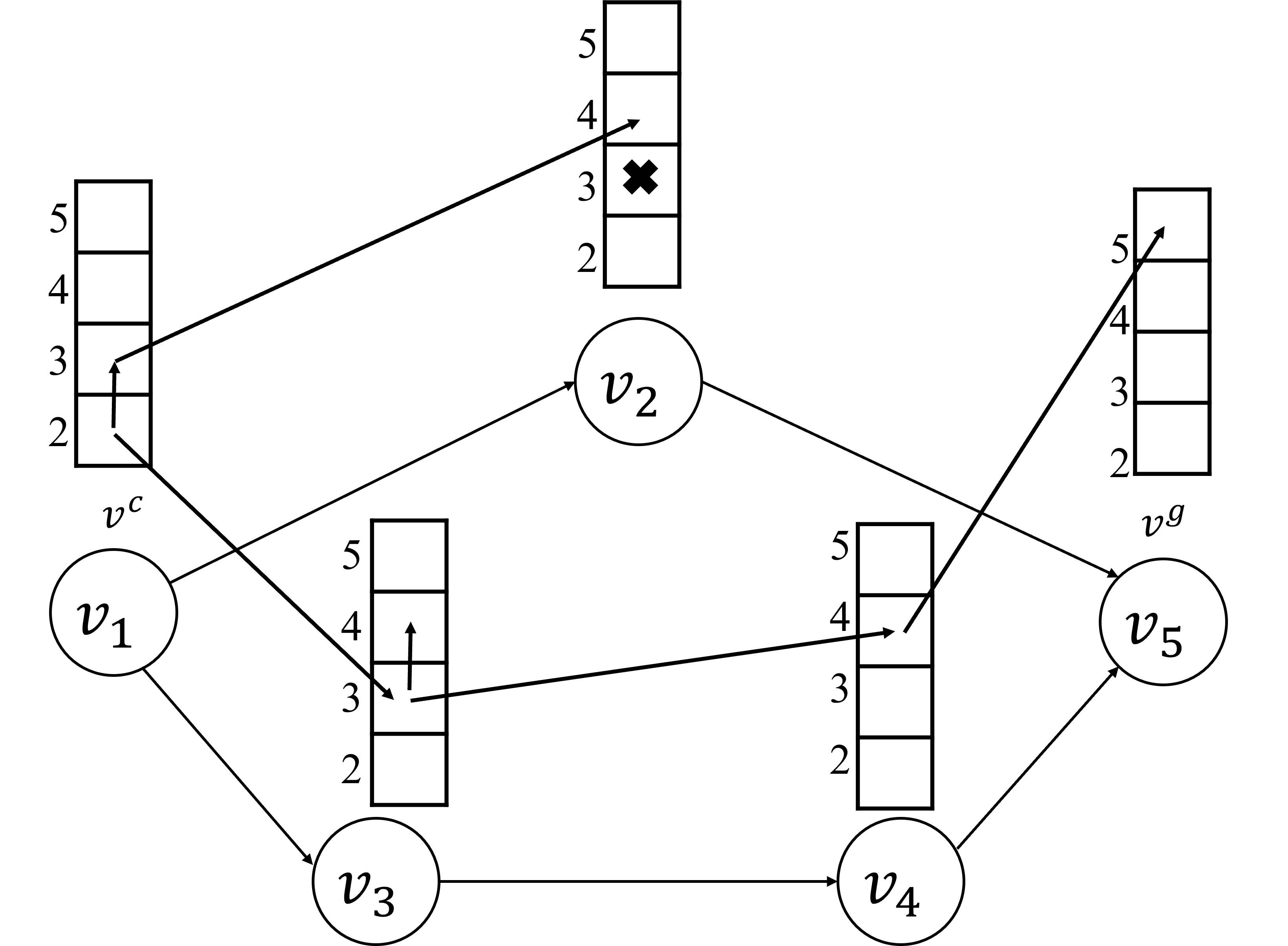}
    }
    \hspace{0.09\textwidth}
    \subfigure[Backward search on TS states.]{
        \centering
        \label{fig:backward_search_ts}
        \includegraphics[width=0.33\textwidth]{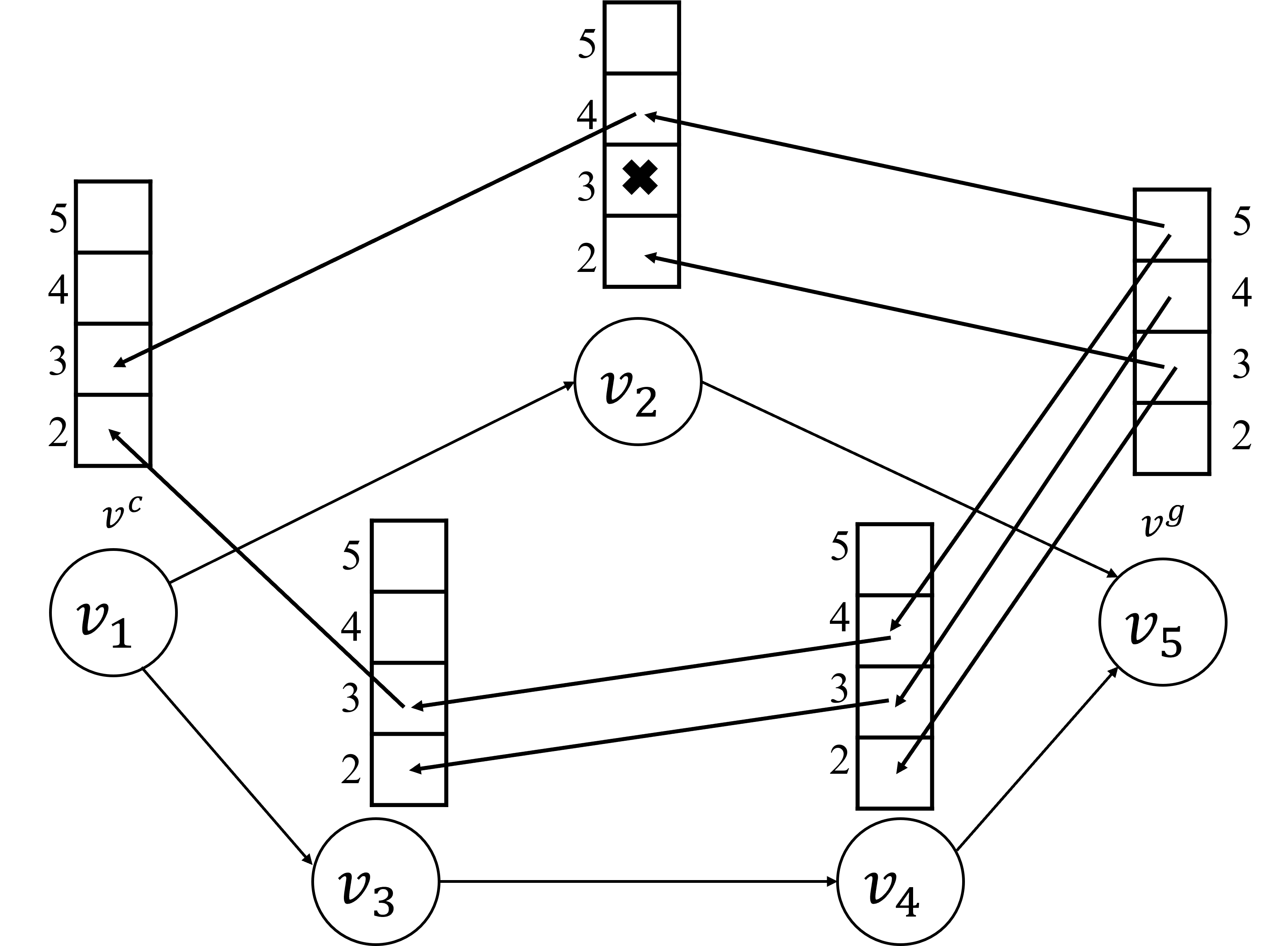}
    }
    \vfill
    \subfigure[Backward search on TIS states. The number in the block is the cost to $v^g$.]{
        \centering
        \label{fig:backward_search_tis}
        \includegraphics[width=0.33\textwidth]{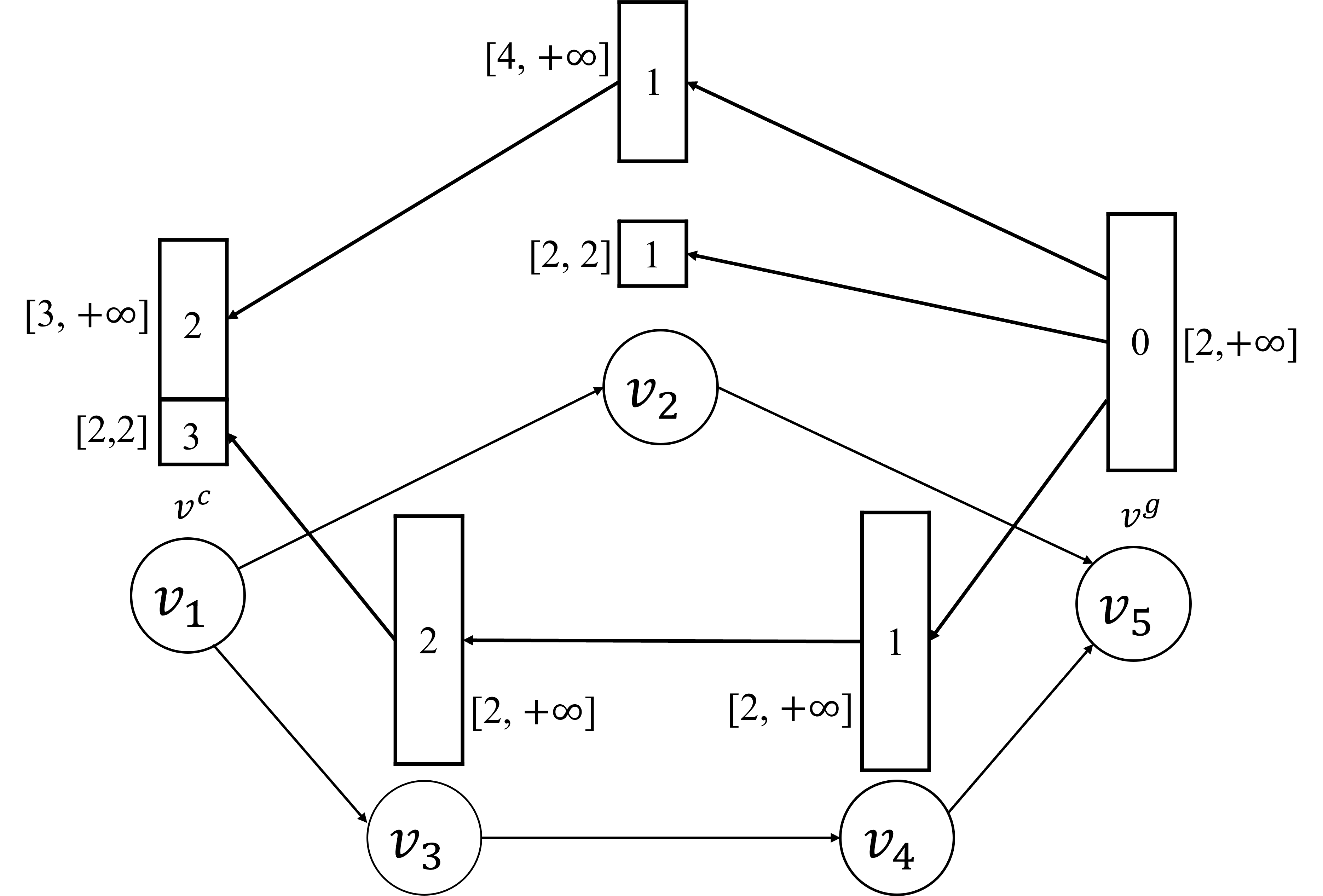}
    }
    \hspace{0.09\textwidth}
    \subfigure[Path building on TIS states.]{
        \centering
        \label{fig:build_path}
        \includegraphics[width=0.33\textwidth]{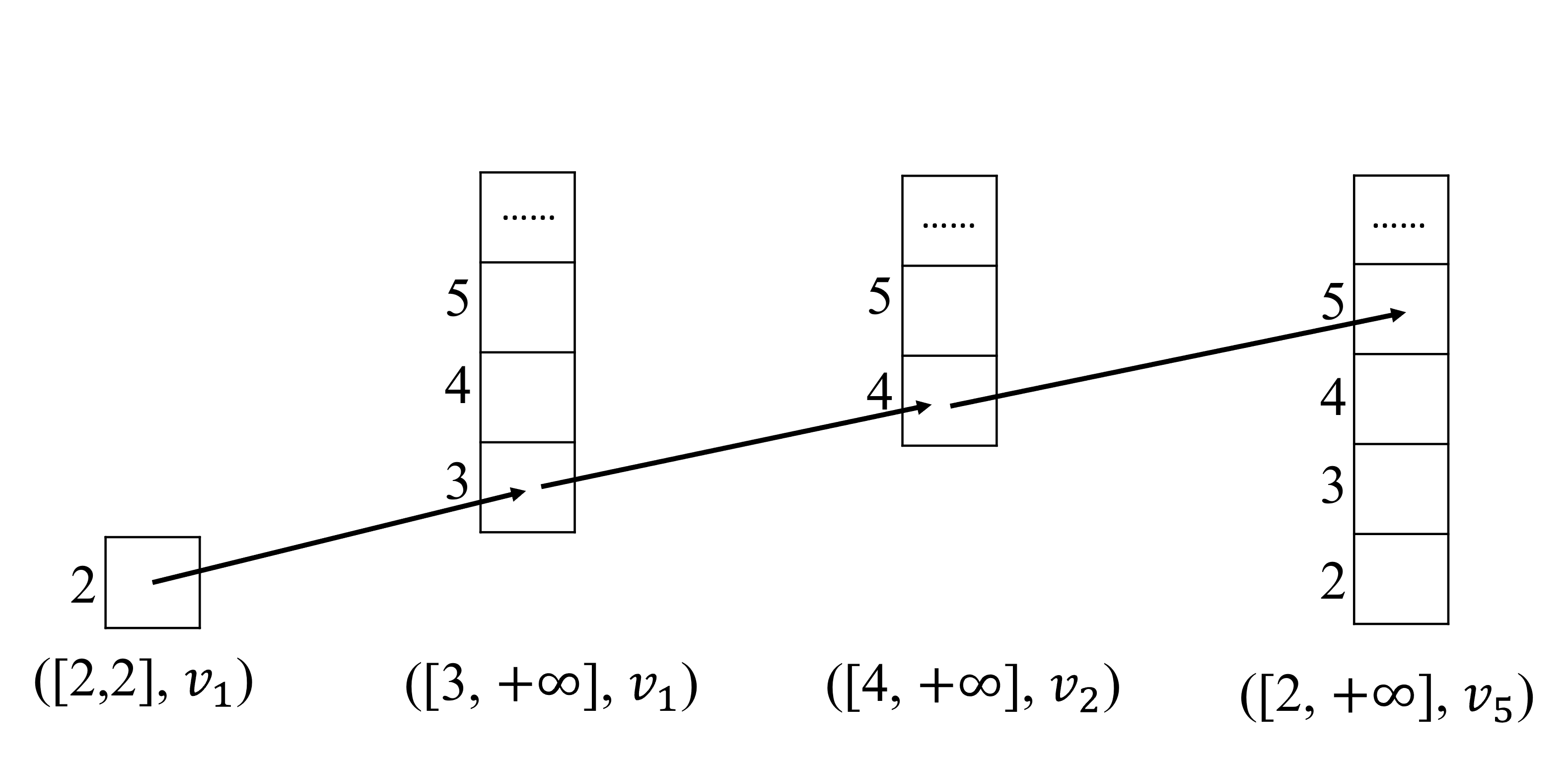}
    }\caption{(a)-(c) are three different search methods on the same graph and constraints. (d) is based on (c). The number near the block shows the time point or the time interval. }
\end{figure*}

\subsection{Sustainable Conflict-Based Search Algorithm}
The Sustainable Conflict-Based Search algorithm  (SCBS) calculates the multi-agent path plan and maintains the planning context sustainably. It is extended from the Conflict-Based Search algorithm (CBS)  \cite{sharon2015conflict}. The main difference between SCBS and CBS is the processing of the planning context. 

% It builds a Conflict Tree(CT) expanded by the best-first search. Each node of CT contains a set of constraints and a collection of agents' paths. The constraints include vertex constraints and edge constraints. In each node, agents plan individually under the constraints. However, agents' paths might have conflicts. When expanding a node, the solution is found if there is no conflict in the node. Otherwise, a conflict is selected. Some different constraints are added to each node's son so that the conflict can be avoided and keep the algorithm's completeness. The lowest-level solver, the SRSIPP algorithm, is used to construct individual agents' shortest path from its current vertex to the goal under some constraints. 
An example of using the SCBS algorithm is shown in Figure \ref{fig:system}.
Before each low-level search, the planning context directly related to the low-level search will be extracted according to the agent and its related constraints. We name this part of the planning context as individual planning context and use $ipc$ to represent it. During the low-level search, $ipc$ is modified to fit the new instance. After the low-level search, the new $ipc$ will be put back into $pc$ for usage in later iterations. 

Algorithm \ref{alg:scbs} shows the pseudo-code. In lines 5-7 and 25-27, $ipc$ is filtered from $pc$ by the function $GetIPC$. After using the low-level solver, the new $ipc$ is placed back to $pc$.
% Algorithm \ref{alg:scbs} shows the pseudo-code. In each node, $cons$ and $path$ are hash tables, whose key is the agent, saves the constraints and path for the agent. $cost$ saves the SOC of all agents.
% In the beginning, the root node does not contain any constraints. Each agent uses the low-level solver to construct the shortest paths from their current vertex to their goal (lines 2-8). Before calling the low-level solver, $ipc$ is filtered from $pc$ in the function $GetIPC$, finding the value of the hashtable according to the agent and the constraints(line 5). After searching, the low-level solver will return the solution in the node and the new $ipc$ (line 6). The new $ipc$ is used to update $pc$(line 7). The SOC function returns the sum of the cost of all agents' paths (line 9). The root node is put in the open list(line 10). In each iteration, the minimum cost node pops from the OPEN list(lines 12 - 13). If there is no collision in the node, the solution in the node is the best and return the solution directly (line 14-17). Otherwise, the constraint set is built according to the earliest collision(line 18). When expanding the node, all the father's constraints and solutions are copied to the son node, while an extra constraint in the constraint set is also added to the son node(lines 20 - 24). The low-level solver updates the constraint agent's path while maintaining the $ipc$ and $pc$ (lines 25 - 27). The node is inserted into the OPEN list if the planning is successful. After expanding the new node, run the next iteration of the while loop.

\begin{algorithm}[tb]
\caption{SCBS}
\label{alg:scbs}
\textbf{Input}: agents $A$, current time point $t^c$, planning context $pc$ 

\begin{algorithmic}[1] %[1] enables line numbers
\STATE  $OPEN$ $\leftarrow$ $\emptyset$
\STATE  $R$ $\leftarrow$ new node
\STATE  $R.cons$ $\leftarrow$ $\emptyset$
\FOR{ each agent $a_i$ }
\STATE $ipc$ $\leftarrow$ $GetIPC$($pc$,$a_i$,$R.cons[a_i]$)
\STATE $R.path[a_i]$, $ipc$ $\leftarrow$ \\ \qquad $SRSIPP$($a_i$, $NULL$, $t^c$, $ipc$) // Algorithm 3
\STATE Update $pc$ by $ipc$.
\ENDFOR
\STATE $R.cost$ $\leftarrow$ calculate the SOC of $P.paths$
\STATE $OPEN$ $\leftarrow$ $OPEN$ $\cup$ \{$R$\}
\WHILE{$OPEN$ $\neq$ $\emptyset$}
\STATE $N$ $\leftarrow$ minimum cost node from $OPEN$.
\STATE $OPEN$ $\leftarrow$ $OPEN \backslash \{N\}$
\STATE $L$ $\leftarrow$ the earliest collision in $N$
\IF{$L$ is $None$}
\RETURN $N.paths$, $pc$
\ENDIF
\STATE $C$ $\leftarrow$ Get constraints from $L$
\FOR{constraint $c$ in $C$}
\STATE $P$ $\leftarrow$ new node
\STATE $P.cons$ $\leftarrow$ $N.cons$
\STATE $P.paths$ $\leftarrow$ $N.paths$
\STATE $a$ $\leftarrow$ $c.agent$
\STATE Insert $c$ in $P.cons[a]$.
\STATE $ipc$ $\leftarrow$ $GetIPC$($pc$, $a$, $P.cons[a]$)
\STATE $P.paths[a]$, $ipc$ $\leftarrow$ \\ \quad $SRSIPP$($a$, $P.cons[a]$, $t^c$, $ipc$)  // Algorithm 3
\STATE Update $pc$ by $ipc$.
\IF{$P.path[a]$ is not $NULL$}
\STATE $P.cost$ $\leftarrow$ calculate the SOC of $P.paths$
\STATE $OPEN$ $\leftarrow$ $OPEN$ $\cup$ \{$P$\}
\ENDIF
\ENDFOR
\ENDWHILE

\end{algorithmic}
\end{algorithm}

\subsection{Sustainable Reverse Safe Interval Path Planning Algorithm }
We now introduce the Sustainable Reverse Safe Interval Path Planning algorithm (SRSIPP). The SRSIPP is a single-agent solver based on A* \cite{hart1968formal} and SIPP \cite{phillips2011sipp}, designed for reusing the previous individual planning context to minimize the complexity of searching. We omit the agent index $i$ and the current number of the planning iteration $j$ in all symbols when discussing the SRSIPP algorithm, e.g. $t^s_{i,j}$, $t^g_{i,j}$, $v^s_i$, $v^g_i$ to $t^s$, $t^g$.
% e.g. $t^s_{i,j}$, $t^e_{i,j}$, $t^g_{i,j}$, $v^s_i$, $v^g_i$ to $t^s$, $t^e$, $t^g$, 
Let $v^c$ be the agent's current vertex, and $s^c = (t^c, v^c)$ be the agent's current state.

In the online MAPF, agents may replan while executing their plan. Although the starting vertex of each planning is different, the ending vertex is invariant. SRSIPP uses this property to achieve the target of reusing the previous planning context. For some single-agent solvers, the agent is planned from its current state to its goal through the edges. These search methods are called forward search. The planning can also search from the goal to its current state through the reverse edges. These search methods are called backward search. The SRSIPP is a backward search algorithm.

% We define $G^R(V^R,E^R)$ is the reverse digraph of $G(
% V,E)$, where $V^R = V$ and $E^R = \left \{ (j,i) | (i,j) \in E \right \}$. Backward searching do search in the reverse graph.

% For forward search, searching is on the full time-space space. For the time-space(TS) state of $(t, v)$ where $t$ is the time point and $v$ is the vertex, its neighbour is $(t+1, v')$ where $(v, v') \in E$. On a four-neighbor grid, We can use $v=(x,y)$ to represent the position in the directed graph defined. The neighbour of $(t, (x, y))$ are $(t+1, (x-1, y))$, $(t+1, (x, y-1))$, $(t+1, (x+1, y))$, $(t+1, (x, y+1))$. The start TS state is $(t^s, v^s)$ where $t^s$ and $v^s$ are the agent start time point and vertex. The goal TS state is $(t^g, v^g)$ where $v^g$ is the goal vertex and $t^g$ is the time point reaching the goal. Note that $t^g$ is an integer larger than and equal to $t^s$, and it can not be determined before searching. When the vertex of the state pop from the OPEN list is the goal, the shortest path is found.

In the MAPF problem, most single-agent search methods are forward search on the time-space (TS) state. An example is shown in Figure \ref{fig:forward_search_ts}. However, since the entire search tree is rooted at the start vertex, which changes with each search, the forward search cannot reuse previous planning information. Observing that the goal vertex is invariant for the same agent, we can set the goal vertex as the root of the search tree to reuse this tree in the following search. However, we cannot predict the arrival time before the search starts. For the optimality of the algorithm, all states that reach the goal earlier must be fully searched before states that arrive later, resulting in a large amount of additional computation. Figure \ref{fig:backward_search_ts} shows an example.

To speed up the calculation, we propose to search on the time-interval-space (TIS) state. Figure \ref{fig:backward_search_tis} shows an example. Let $([t_l,t_r], v)$ be the TIS state where $[t_l, t_r]$ is a time interval and $v$ is the vertex, and $(t, v)$ be a TS state where $t$ is a time point. The TIS state $([t_l,t_r], v)$ contains a collection of TS states $\left \{ (t, v) | t\in [t_l, t_r] \right \}$.  Let $g_{TS}(s)$ and $g_{TIS}(s)$ be the cost from a TS and TIS state to $v^g$. All TS states in the same TIS state can use the same vertices sequence as the shortest path to the goal. Formally, we have
\begin{equation}
\label{equ:tists}
\forall t \in [t_l, t_r], g_{TIS}(([t_l,t_r], v)) = g_{TS}((t, v)).
\end{equation}
We refer to the function value of $g_{TS}(s)$ and $g_{TIS}(s)$ as the $g$ value of the TS state and the TIS state, respectively.
A TIS state is valid if and only if it does not cover any constrained TS states or cover TS states with a time point less than $t^s$. A maximum TIS state is a valid TIS state whose time interval is not a subset of other valid TIS states on the same vertex. Before searching a vertex, all the maximum TIS states in the vertex will be created. Their $g$ values are set to infinity, except that the $g$ values of TIS states on $v^g$ are set to $0$. 
% All valid TS state belongs to one and only one TIS state, and the time interval of all TIS states in a vertex are disjoint.

% All time points we considered are larger than or equal to $t^s$ because the agent cannot reach the state before it starts.

% \begin{definition}
% Two TS state $(t_0, v_0)$ and $(t_1, v_1)$ iff $t_0=t_1$ and $v_0 = v_1$
% Two TIS state set $S_1$ and $S_2$ are equal iff three condition is satisfied:
% 1. \forall s \in 

% \end{definition}

The procedure of expanding a state for backward search is described as follows. The \textit{actions} of the search include moving to a neighbor vertex through reverse edges and staying in the same vertex. We use $([t_l, t_r], v)$ to represent the TIS state that needs to expand. We define a TIS state $s'$ as a dummy son of another TIS $s$ iff all TS states in $s'$ can take one action to one of TS states in $s$ in the forward direction. The cost of the dummy son will be one more than the original state, i.e., $g_{TIS}(s')=g_{TIS}(s)+1$. Let  $N(v) = \left \{ v' | ((v', v) \in E) \vee (v'=v) \right  \}$ be the neighborhood of $v$, and $ds(([t_l, t_r], v)) = \left \{([t_l', t_r'], v') | v' \in N(v) \right \}$ be the dummy son set of $([t_l, t_r], v)$. For the dummy son state $([t_l', t_r'], v') \in ds(([t_l, t_r], v))$.  We set 
\begin{equation}
\label{equ:tltr}
	\begin{split}
	t_l' &= \max(t^s, t_l - 1)  \\
	t_r' &=  t_r - 1
	\end{split}
\end{equation}

The dummy son is used to update all current TIS states on the same vertex. If the entire TIS state can be improved, the cost of the TIS state can be modified directly. Suppose only part of the TIS state can be improved due to the time interval coverage. In that case, the TIS state will be split into several TIS states according to the time interval coverage, and only the state completely covered by the dummy son's time interval will be updated. Figure \ref{fig:time_interval_updated} shows an example.

% \begin{figure}[t]
% \centering
% \includegraphics[width=0.4\textwidth]{} % Reduce the figure size so that it is slightly narrower than the column.
% \caption{An example of expanding a state by moving action. Dotted circles represent a subgraph. The vertex $g$ is the goal vertex of the agent. The state that needs to be expanded is in vertex $v$, which is a vertex that can reach $g$. The vertex $v'$ is a vertex in the neighborhood of $v$. There is an alternative path from $v'$ to $g$ through vertex $v$.}
% \label{fig:graph}
% \end{figure}
\begin{figure*}[t]
\centering
\includegraphics[width=0.75\textwidth]{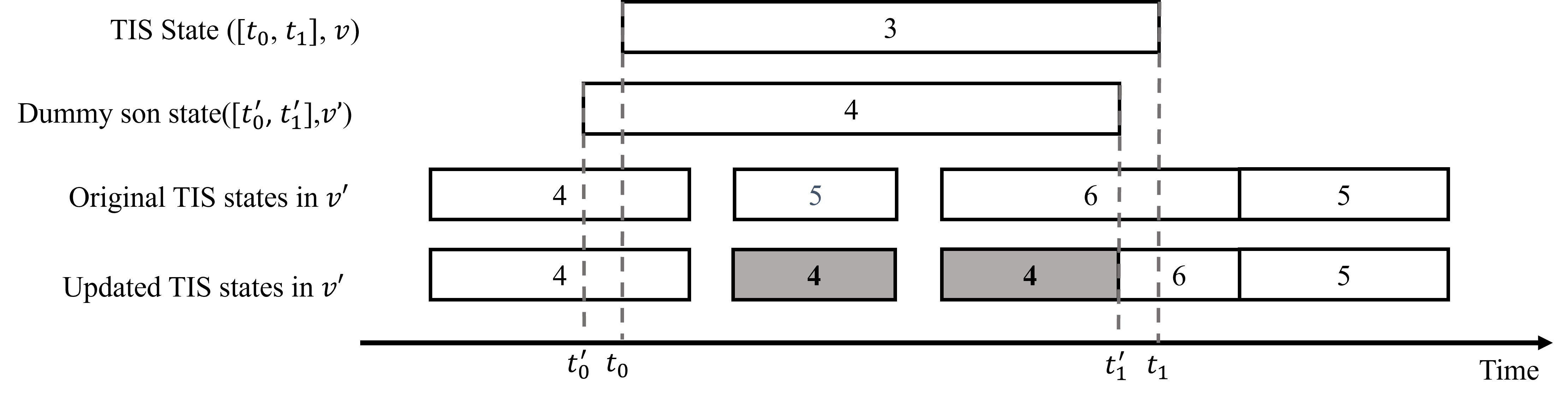} % Reduce the figure size so that it is slightly narrower than the column.
\caption{An example to show the process of updating the cost of TIS states. The blocks indicate the time interval of the state, and the number in the block shows the cost to $v^g$. The first row and the second row show the TIS state $([t_0, t_1],v)$ and its dummy son state $([t_0', t_1'],v)$. The third row indicates all valid TIS states in $v$ before being updated. The fourth line shows the updated TIS states. The first state cannot be improved, while for the second state, the whole state can be improved. For the third state, only part of the time interval is covered by $[t_0', t_1']$. The state is split into two states, and only the cost of the covered state is updated. The time interval of the fourth state is not covered, so it is not affected. The grey block shows the improved TIS states.}
\label{fig:time_interval_updated}
\end{figure*}

% \begin{observation}
% Consider the stay action. Let $v$ is a vertex in $G$, $s=([t_l, t_r], v)$ is a TIS in $S(v)$ and its dummy son state is $s'=([t_l’, t_r‘], v)$ where $t_0' = max(t^a, t_0 - 1)$ and $t_1' =  t_1 - 1$. Let $s'' = ([t_l-1, t_l-1], v)$ be a alternative dummy son.
% If $t_0=t^a$, $F(s, s)$let For the staying action, the dummy son $([(max(t^a, t_0 - 1) , t_1 - 1], v')$ has the same effect as the dummy son $([(max(t^a, t_0 - 1) , max(t^a, t_0 - 1)]), v')$
% \label{obs:obs1}
% \end{observation}

% Base on the Observation \ref{obs:obs1}, We can process the dummy son generated by moving and the dummy son generated by staying separately to reduce redundant operations.
 
We use A* for the backward search. Let $h_{TIS}(s)$ be the heuristic function of the cost estimation from the TIS state $s$ to $s^c$, and $h_{v}(v)$ be the heuristic function of the path length estimation from vertex $v$ and $v^c$. We define $h_{TIS}(s)$ as
\begin{equation}
\label{equ:htis}
	h_{TIS}(([t_l, t_r], v)) = \max(\max(t_l-t^c,0), h_v(v))
\end{equation}
where $t_l \geq t^s$. The states, where $t_r < t^s$, will not be searched, and the heuristic function for these states is undefined.
In the 4-neighbor grid, $h_v(v)$ is usually defined by the Manhattan distance to the goal point, i.e.,
\begin{equation}
	h_{v}(v) = |v_x - v^g_x| + |v_y - v^g_y|
\end{equation}
where $(v_x, v_y)$ is the coordination of $v$ and $(v^g_x, v^g_y)$ is the coordination of $v^g$.
The evaluation function of a TIS state is defined as follows.
\begin{equation}
    f_{TIS}(s) = g_{TIS}(s) + h_{TIS}(s)
\label{equ:f}
\end{equation}
Let the function value of $h_{TIS}(s)$ and $f_{TIS}(s)$ be the $h$ value and the $f$ value of a TIS state $s$, respectively.

In the SRSIPP, the individual planning context includes the open list and the closed list. We use $OPEN$ and $CLOSED$ to represent them. At the beginning of the new search, we adjust the individual planning context to fit the new planning. Specifically, we recalculate the $h$ value and the $f$ value of all states in the $OPEN$, according to the new current state and Equations (\ref{equ:htis}, \ref{equ:f}). The $CLOSED$ can be used directly without modification. After the adjustment, the new search can reuse the 
$OPEN$ and the $CLOSED$ of the previous search. 

% If a time point $t$ falls in the time interval of $s$, we say $s$ covers $t$.
%  We say a state is \textit{terminal state} if the state is the optimal state in $a.v^c$.

The pseudo-code is shown in Algorithm \ref{alg:srsipp}. In the code, $a.v^c$ and $a.v^g$ are the current vertex and the goal vertex of the agent $a$. In addition, $s.t_l$ and $s.t_r$ are the endpoints of the time interval in $s$. The $g$ value, $h$ value and $f$ value of state $s$ are saved in $s.g$, $s.h$ and $s.f$, respectively. A state is \textit{terminal state} if the state is the optimal final state for the search. We use $ts$ to save the terminal state and $cts$ to save all candidate terminal states. More specifically, $cts$ saves all closed states in the vertex $a.v^c$, which is reachable for the agent, i.e., $t^c \leq s.t_r$. In line 1, the individual planning context is extracted. If the current state is invalid, return directly (lines 2-4). In line 5, we update all states' $h$ value and $f$ value in $OPEN$. In line 6, if the TIS states on $a.v^g$ have not been created, create all maximum TIS states and put them into the $OPEN$. In lines 7-12, the function $StopCheck$ is used to check whether the search can stop. If yes, build the path by the function $BuildPath$ and return directly (We will discuss the detail of $StopCheck$ and $BuildPath$ later). Otherwise, the search starts. In each iteration, get the state with minimum $f$ value in the $OPEN$ (line 14). If the time interval cannot cover any time point after $t^c$, the state is useless for the current and later searches. We ignore it and go to the next iteration of the while loop (lines 15-17). In lines 18-27, dummy sons are generated to update the states. In line 29, we update the $OPEN$, the $CLOSED$, and the $cts$. Finally, we check whether the search can stop (lines 30-32). If no, go to the next iteration.
\begin{algorithm}[t]
\caption{SRSIPP}
\label{alg:srsipp}
\textbf{Input}:agent $a$, constraints $cons$, current time point $t^c$, individual planning context $ipc$

\begin{algorithmic}[1] %[1] enables line numbers\IF{no expanded}
\STATE $OPEN$, $CLOSED$ $\leftarrow$ $ipc$
\IF{$a$ is in the scene and ($t^c$, $a.v^c$) $\in$ $cons$}
\RETURN{$false, ipc$}
\ENDIF
\STATE Update $h$ value and $f$ value of states in $OPEN$.
\STATE Create maximum TIS states in $v'$ by $cons$ if the states are uncreated, and put them into $OPEN$.
\STATE $f_{min}$ $\leftarrow$ the smallest f value in $OPEN$
\STATE $cps \leftarrow \{([t_l, t_r], v)\in CLOSED| a.v^c=v  \wedge$ \\ $\qquad \qquad \qquad \qquad \qquad \qquad \qquad \quad a.t^s \leq t_r) \}$ 
\STATE $stop, ts \leftarrow StopCheck(cps, f_{min}, a)$ // Algorithm 4
\IF{$stop$} 
\RETURN{$BuildPath(ts, a), \{OPEN, CLOSED\}$}
\ENDIF
\WHILE{$OPEN$ is not empty}
\STATE $s$ $\leftarrow$ the TIS state with minimum $f$ value in $OPEN$
\IF{$s.t_r  < t^c$}
\STATE continue
\ENDIF

\FOR{ each vertex $v'$ in $N(s.v)$}
\STATE $ds$ $\leftarrow$ the dummy son of $s$ in $v'$
\STATE Create maximum TIS states in $v'$ by $cons$ if the states are uncreated, and put them into $OPEN$.
\FOR{ each TIS state $s'$ in $v'$ }
\IF{$s'$ can be improved by $ds$}
\STATE $S_{new}'$ $\leftarrow$ New states after improving $s'$
\STATE $OPEN \leftarrow OPEN \backslash \{s'\}  \cup S_{new}'$
\ENDIF
\ENDFOR
\ENDFOR
\STATE $stop, ts \leftarrow StopCheck(cps, s.f, a)$ // Algorithm 4
\STATE Update $OPEN$, $CLOSED$ and $cps$ by $s$.
\IF{$stop$}
\RETURN{$BuildPath(ts, a), \{OPEN, CLOSED\}$}
\ENDIF
\ENDWHILE

\end{algorithmic}
\end{algorithm}

The $StopCheck$ algorithm checks whether the search can stop and finds the best terminal state. 
When finding a better goal state out of $cps$ is impossible, we stop searching. There are two different scenarios for the stop. If the agent is in the scene, the search stops when a state in $cps$ covers $t^c$. Otherwise, the agent can choose a time point to enter the scene. Supposed a state $([t_l, t_r],v)$ is selected as the terminal state, the best enter time point is $\max(t_l, t^c)$, and the total cost from agents' current TS state $(t^c, v^c)$ to vertex $v^g$ is $\max(t_l - t_c, 0) + g_{TIS}(([t_l, t_r],v))$. If the minimum total cost of choosing a state in $cps$ is less than or equal to the $f$ value of the current expanded state, no better solution can be found, and the search can stop.

The pseudo-code of the $StopCheck$ algorithm is shown in Algorithm \ref{alg:ssc}. 
In lines 1-3, if there is no element in $cts$, the search can not stop. In lines 4-10, if the agent is in the scene, the search can stop only when a state in $cts$ covers $t^c$. In lines 12-18, if the agent is not in the scene, find the state with minimum total cost. If the cost is not higher than the minimum $f$ value of all nodes in the $OPEN$, the search can stop and return the best terminal state.

The $BuildPath$ function in Algorithm \ref{alg:srsipp} builds the final TS state path. After getting to the terminal state, we backtrack to get a TIS path. Based on it, we build the TS path as the solution. Specially, if the agent is not in the scene and the current time point is earlier than any time point in the terminal state's time interval, the agent will wait until it reaches the earliest time point of the target TIS state and then enter the scene. Figure \ref{fig:build_path} shows an example of the path building, selecting $([3, \infty], v_1)$ as the terminal state.

\section{Theoretical Analysis}
\begin{theorem}
\label{theo:theo1}
If $h_v(v)$ is admissible and satisfies the consistency assumption, when the first return value of the StopCheck function is true, it is impossible to have a better terminal state out of $cts$.
\begin{proof}
The proof is given in the appendix.
\end{proof}
\end{theorem}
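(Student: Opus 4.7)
The plan is to cast the claim as a standard A* optimality argument adapted to the backward TIS search. The backbone is the invariant that, at any moment of the search, any potential terminal state $s^\dagger \notin cts$ from which a better solution could be built must admit a state $s \in OPEN$ on some optimal backward path from $v^g$ to $s^\dagger$ whose $f$-value lower-bounds the cost of that solution. Once this invariant is in place, the stopping condition of Algorithm~\ref{alg:ssc}, which compares the best candidate cost built from $cts$ against $f_{\min} := \min_{s \in OPEN} f(s)$, directly rules out any better terminal outside $cts$.

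First I would verify that $h_{TIS}$ is admissible and consistent on TIS states, inheriting these properties from $h_v$. Admissibility is immediate: for any TS state $(t, v)$ lying inside a TIS state $([t_l, t_r], v)$, reaching $(t^c, v^c)$ in the forward direction requires at least $h_v(v)$ spatial steps and at least $\max(t_l - t^c, 0)$ temporal steps (time only increases), so the definition in Equation~(\ref{equ:htis}) is a valid lower bound. Consistency then follows from the consistency of $h_v$ together with the unit-cost temporal decrement prescribed by Equation~(\ref{equ:tltr}), since each dummy-son transition contributes exactly $1$ to $g_{TIS}$ while changing $h_{TIS}$ by at most $1$.

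Next I would establish the A* invariant for the backward TIS search even in the presence of TIS splitting and the lazy creation of maximum TIS states. The key point is that whenever an existing state is split (as in Figure~\ref{fig:time_interval_updated}), the resulting sub-intervals inherit correct $g$ values via Equation~(\ref{equ:tists}), and any newly generated maximum TIS state enters $OPEN$ with $g = \infty$, so it cannot be prematurely counted as a terminal. With these properties the usual A* induction applies, and $f(s)$ for every $s \in OPEN$ lower-bounds the true cost of any path that would pass through it.

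Finally I would close out the two case splits of Algorithm~\ref{alg:ssc}. In the in-scene case, any $s = ([t_l, t_r], v^c) \in cts$ covering $t^c$ satisfies $h_{TIS}(s) = 0$ and yields a solution of cost exactly $g_{TIS}(s)$ by Equation~(\ref{equ:tists}); since $f(s) = g(s)$ was at most $f_{\min}$ at the instant $s$ entered $CLOSED$ and $f$-values along the OPEN frontier are non-decreasing under a consistent heuristic, no alternative solution extractable from $OPEN$ can be cheaper. In the not-in-scene case, the candidate total cost $\max(t_l - t^c, 0) + g_{TIS}(s)$ is compared against $f_{\min}$ explicitly before returning true, and the invariant again rules out a cheaper path through $OPEN$. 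I expect the main obstacle to be the bookkeeping around the splitting mechanism and the reuse of $OPEN$ and $CLOSED$ across successive SRSIPP invocations with different $t^c$; in particular, showing that the $h$-value refresh at line~5 of Algorithm~\ref{alg:srsipp} preserves the A* invariant across iterations is the delicate step that makes the rest of the argument go through.
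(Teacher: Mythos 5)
Your proposal is correct and follows what is evidently the paper's intended route: establish that $h_{TIS}$ inherits admissibility and consistency from $h_v$ (the same lemma the paper's proof of Theorem~2 explicitly defers to the appendix), invoke the standard A* frontier invariant, and close the two branches of $StopCheck$ by noting that $f=g$ for closed states at $v^c$ covering $t^c$ and that $c_{\min}\le f_{\min}$ bounds any undiscovered terminal. The ``delicate step'' you flag but do not resolve --- that the $h$-refresh and reuse of $OPEN$/$CLOSED$ across invocations preserves the invariant --- is not a gap for this theorem, since the paper handles that issue separately in the body of the proof of Theorem~2 rather than in Theorem~1.
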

\begin{theorem}
If $h_v(v)$ is admissible and satisfies the consistency assumption, the SRISPP algorithm is complete and optimal.
\label{theo:theo5}
\end{theorem}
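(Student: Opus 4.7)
The plan is to reduce the statement to Theorem~\ref{theo:theo1} plus a standard A$^*$ argument adapted to the TIS state space. I would split the claim into four pieces: (i) $h_{TIS}$ is admissible and consistent on TIS states under the dummy-son transition; (ii) every feasible TS path from the agent's current state to $v^g$ corresponds to a TIS-state sequence of equal cost, and the $BuildPath$ routine realises the converse; (iii) the $g$ value stored on every closed TIS state equals its true backward shortest-path cost to $v^g$ subject to $cons$; and (iv) combining (i)--(iii) with Theorem~\ref{theo:theo1} yields optimality, while finiteness of the relevant TIS state space yields completeness.

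For (i), I would use Equation~(\ref{equ:htis}): any TS state $(t,v)$ contained in $([t_l,t_r],v)$ satisfies $t\geq t_l$, so the time to reach $s^c=(t^c,v^c)$ in the time-expanded graph is at least $\max(t_l-t^c,0)$ and at least $h_v(v)$, whence admissibility of the max. For consistency I would check the decrement along any dummy-son edge: by Equation~(\ref{equ:tltr}) the waiting bound drops by at most one, by consistency of $h_v$ the spatial bound drops by at most one, and the dummy-son action has cost exactly one. For (ii), Equation~(\ref{equ:tists}) guarantees that all TS states inside one maximal TIS state share a common optimal tail to $v^g$; by induction on path length every legal TS tail compresses into a TIS sequence of the same cost, and $BuildPath$ expands a TIS tail back into a TS path by inserting waiting at the entry step, as in Figure~\ref{fig:build_path}.

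For (iii), I would establish the usual Dijkstra/A$^*$ invariant that whenever a TIS state is closed, its $g$ value is final: admissibility plus consistency imply that any later candidate improvement must route through an open state whose $f$ value is no smaller than that of the closed state at the time of its closure. The subtlety is the split-on-partial-coverage rule in Algorithm~\ref{alg:srsipp}: I would need to verify that splitting a TIS state into the portion covered by a dummy son and the uncovered remainder preserves invariant~(\ref{equ:tists}) on both fragments, and that only the covered sub-state receives the relaxed $g$ value, so no uncovered TS state inherits a spuriously improved cost. Step (iv) then glues the pieces: by Theorem~\ref{theo:theo1}, once $StopCheck$ returns true no terminal outside $cts$ can beat the best element of $cts$, so $BuildPath$ emits a minimum-cost path. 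Completeness follows because the number of maximal TIS states is finite (bounded by the number of temporal constraints), the A$^*$ ordering closes each of them at most once, and splits introduce only finitely many new endpoints; the outer loop therefore either returns through $StopCheck$ or exhausts $OPEN$, in which case no feasible path exists.

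The hardest step will be part (iii), specifically the interaction between the partial-coverage split and the backward relaxation. Classical A$^*$ and forward SIPP never split already-generated intervals mid-search, so the usual ``closed means optimal'' lemma requires a fresh monotonicity argument tailored to the ``$s'$ can be improved by $ds$'' test in Algorithm~\ref{alg:srsipp}. I would set up an invariant that ties the $g$ value of each (possibly split) TIS fragment to the shortest backward path whose first action is into a neighbor already expanded, and then induct over the $f$-ordered expansion sequence, checking that the split operation preserves this invariant on every resulting fragment.
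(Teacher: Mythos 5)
Your decomposition (i)--(iv) covers the correctness of a \emph{single} SRSIPP invocation run from scratch, and in that respect it is more thorough than the paper's own proof: the paper simply asserts that the fresh-start case ``is a standard A*,'' whereas you identify the two places where that assertion needs work --- the equal-cost correspondence between TS paths and TIS sequences (your (ii), resting on Equation~(\ref{equ:tists})), and the compatibility of the mid-search interval split with the ``closed means optimal'' invariant (your (iii)). Your step (i) matches the paper's appendix lemma on admissibility and consistency of $h_{TIS}$, and your step (iv) matches the paper's final appeal to Theorem~\ref{theo:theo1}.

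The genuine gap is that you never address the \emph{sustainable} part of SRSIPP, which is the case the paper's proof spends most of its effort on: a later invocation begins with a non-empty $OPEN$ and $CLOSED$ inherited from a previous search that was conducted toward a \emph{different} current state $s^c$, hence under a different heuristic. Your invariant in (iii) is stated for a single $f$-ordered expansion sequence; it does not by itself justify keeping states closed that were closed under the old $h_{TIS}$, nor re-scoring the inherited $OPEN$ with the new $h$ and resuming. The paper's argument here has two ingredients you would need: first, the backward $g$ values (cost to $v^g$) are independent of $s^c$, so the finality of closed states survives the change of heuristic; second, the resumed search is equivalent to a standard A* in which some states happen to be closed ``in advance,'' and this is harmless because every unclosed neighbor of a prematurely closed state is already in $OPEN$, so the frontier still separates the settled region from the unexplored part of the state space. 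Without an argument of this form your proof establishes the theorem only for the first call to SRSIPP on each agent--constraint configuration, not for the algorithm as actually run.
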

\begin{proof}
If $h_v(v)$ is admissible and satisfies the consistency assumption, then $h_{TIS}(s)$ is also admissible and satisfies the consistency assumption. It is proved in the appendix.

If it is the first planning for the configuration of $a$ and $cons$, the $OPEN$ and $CLOSED$ are empty initially. It is a standard A*, and the search's completeness and optimality are satisfied.

If the $OPEN$ and $CLOSED$ are not empty at the beginning of the search, the state in the $CLOSED$  will not be reopened, and the $g$ value of the state is already the smallest cost to the goal vertex. For the state in the $OPEN$, we update their $h$ value and $f$ value to fit the new situation. The scenario can be seen starting from a snapshot in the standard A*, except that some states are closed in advance. However, the early closed nodes will not affect the completeness and optimality of the algorithm because all their unclosed neighbors are in the $OPEN$.

Combining theorem \ref{theo:theo1} and the above discussions, the theorem is proved.

\end{proof}

\begin{corollary}
If $h_v(v)$ is admissible and satisfies the consistency assumption, SCBS is complete and optimal.
\end{corollary}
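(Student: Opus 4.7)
The plan is to reduce the statement to the known completeness and optimality of standard CBS, leveraging Theorem 2 to justify that the low-level calls behave as required. Standard CBS is complete and optimal whenever its low-level solver returns, for every agent and every constraint set generated during the high-level search, an optimal single-agent path consistent with those constraints. So the entire task becomes showing that (i) SCBS is structurally identical to CBS at the high level and (ii) each invocation of SRSIPP inside SCBS returns an optimal single-agent path under the current constraint set, despite the reuse of planning context through $pc$ and $ipc$.

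First I would compare Algorithm 2 line by line with standard CBS. The root node creation (lines 2--10), the best-first expansion on $OPEN$ ordered by SOC (lines 11--13), the earliest-conflict selection (line 14), the constraint-splitting into child nodes (lines 18--24), and the re-insertion into $OPEN$ (lines 28--31) are exactly the CBS template. The only additions are the $GetIPC$ calls (lines 5, 25) and the write-backs to $pc$ (lines 7, 27). These operations touch only the data structure $pc$ used to accelerate the low-level solver; they do not alter $N.cons$, $N.paths$, $N.cost$, or the $OPEN$ list semantics. So the high-level search tree explored by SCBS is the same object as the CBS constraint tree.

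Next I would invoke Theorem 2 to handle the low-level calls. Each call $SRSIPP(a, P.cons[a], t^c, ipc)$ is executed with an $ipc$ that was extracted by $GetIPC$ based on the agent and its constraints; by Theorem 2 the call returns an optimal cost single-agent path for $a$ satisfying $P.cons[a]$ (or $NULL$ if none exists), regardless of whether $ipc$ was empty or populated by prior searches, because the theorem's completeness/optimality guarantee holds for any valid initial $(OPEN, CLOSED)$ consistent with $a$'s configuration. The write-back then stores an $ipc$ that remains consistent with that same (agent, constraint) key, so future $GetIPC$ queries for a matching key still return a valid planning context. Hence every low-level invocation yields exactly what CBS expects.

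Finally I would conclude by citing the completeness and optimality proof of CBS \cite{sharon2015conflict}: given an optimal-and-complete low-level solver and the standard CBS expansion scheme, the high-level search returns a minimum SOC collision-free joint plan whenever one exists. Since SCBS satisfies both hypotheses under the assumption on $h_v(v)$, it inherits completeness and optimality. The main obstacle, and the only nontrivial step, is justifying that the cached $OPEN$ and $CLOSED$ stored in $pc$ across high-level nodes remain valid inputs for SRSIPP when the same (agent, constraint set) key is revisited; this reduces to observing that $GetIPC$ is keyed by $(a_i, R.cons[a_i])$, so any retrieved $ipc$ corresponds to a prior optimal search under identical constraints, and Theorem 2's guarantee applies verbatim.
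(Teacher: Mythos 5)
Your proposal is correct and follows essentially the same route as the paper: the paper likewise reduces the claim to the known completeness and optimality of CBS \cite{sharon2015conflict}, invokes Theorem~\ref{theo:theo5} to guarantee that every SRSIPP call (with empty or reused $OPEN$/$CLOSED$) returns an optimal constrained single-agent path, and observes that the planning-context operations ($GetIPC$ and the write-backs to $pc$) do not alter the high-level search semantics. Your version simply spells out the line-by-line correspondence with CBS and the keying of $ipc$ by $(a_i, cons)$ in more detail than the paper's one-sentence argument.
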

\begin{corollary}
If $h_v(v)$ is admissible and satisfies the consistency assumption, SR is complete and snapshot optimal.
\end{corollary}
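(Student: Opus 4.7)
My plan is to reduce the claim to the preceding corollary, which asserts that SCBS is complete and optimal whenever $h_v$ is admissible and consistent. Snapshot optimality of SR is a per-iteration property: at every replanning trigger $t^{new}_j$, the plan $\pi^j$ must minimize SOC under the hypothesis that no further agents arrive. Since the only place SR makes planning decisions is the single call to SCBS on line 10 of Algorithm \ref{alg:sr}, snapshot optimality reduces to showing that the instance handed to SCBS faithfully represents the true snapshot at $t^c = t^{new}_j$.

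First I would verify that lines 1--9 of Algorithm \ref{alg:sr} construct the correct snapshot. This breaks into three bookkeeping sub-claims: (i) agents that have reached their goal by $t^c$ are removed from $A$ and therefore contribute nothing to the remaining SOC (lines 2--4, consistent with the disappear-at-goal assumption); (ii) for every remaining agent, the current vertex $v^c$ read from $Ex$ equals the vertex it actually occupies at $t^c$, so the problem SCBS sees coincides with the true one (line 6); and (iii) all newly-arrived agents $A^+$ are merged in (line 9). With those in hand, the corollary gives an SOC-optimal plan $p$, which SR splices into $Ex$ on line 11. Because SOC is additive over the remaining path lengths of the currently-active agents, this yields snapshot optimality.

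Completeness requires an additional step, because SCBS's completeness only guarantees a solution for the instance it is handed. Here I would invoke Proposition 2 of \cite{vsvancara2019online}, which under the garage plus disappear-at-goal assumptions guarantees solvability of the online MAPF instance whenever the offline part is solvable and every agent has \emph{some} path from its start to its goal in $G$. Both hypotheses are part of the overall problem statement, and they are preserved across iterations: removing completed agents and inserting new agents into garages cannot destroy the existence of a joint feasible plan, since any plan for the old agents extends to one in which each new agent simply waits in its garage until the old agents have disappeared. Consequently every SCBS call on line 10 is given a solvable instance, so by the corollary it returns a valid $p$ in finite time, and SR terminates on every iteration.

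The main obstacle, I expect, is the bookkeeping in sub-claim (ii) together with the claim that the reuse of the planning context $pc$ does not secretly compromise either property. One must argue that removing a completed agent's entries in $pc$ (line 4) together with $GetIPC$'s filtering inside SCBS guarantees that no stale constraint or outdated state survives into the next snapshot, so that the $ipc$ which SRSIPP inherits is still a valid input in the sense of Theorem \ref{theo:theo5}. Once that invariant is established, the previous corollary applies verbatim to each iteration and the two properties of SR follow.
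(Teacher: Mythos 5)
Your proposal follows essentially the same route as the paper: reduce SR to the Replan All scheme with an optimal per-snapshot solver (hence snapshot optimality via the preceding corollary and Proposition 2 of \cite{vsvancara2019online} for solvability), and then argue that the planning-context bookkeeping does not disturb completeness or optimality. The paper's own proof is a two-sentence appeal to the properties of CBS and RA plus the assertion that the context operations are benign, so your more explicit version — including your honest flagging of the $pc$/$ipc$ invariant as the real crux, which the paper also leaves unproved — is a faithful elaboration of the same argument.
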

These two corollaries can be proved according to the property of CBS, and RA algorithm  \cite{sharon2015conflict,vsvancara2019online}. Because Theorem \ref{theo:theo5} is proved and the operations on planning context in SCBS and SR will not affect completeness and optimality, the corollaries are proved.

\begin{algorithm}[tb]
\caption{StopCheck}
\label{alg:ssc}
\textbf{Input}: candidate terminal state set $cts$, the minimum estimated function value in the open list $f_{min}$, the agent $a$ 

\begin{algorithmic}[1] %[1] enables line numbers\IF{no expanded}
\IF{$cts$ is empty}
\RETURN{$false, NULL$}
\ENDIF

\IF{$a$ is in the scene}
\STATE{s $\leftarrow$ the state which covers $t^c$} 
\IF{$s$ is $NULL$}
\RETURN{$false, NULL$}
\ELSE
\RETURN{$true, s$}
\ENDIF
\ELSE
\STATE{$s_{min} \leftarrow argmin_{s \in cts}(\max(s.t_l - t^c, 0) + s.g$)}
\STATE{$c_{min} \leftarrow min_{s \in cts}(\max(s.t_l - t^c, 0) + s.g$)}
\IF{$c_{min} \leq f_{min}$}
\RETURN{$true, s_{min}$}
\ELSE
\RETURN{$false, NULL$}
\ENDIF 
\ENDIF

\end{algorithmic}
\end{algorithm}

\begin{figure}[t]
\centering
    \subfigure[Small grid maps.]{
       \centering
       \label{fig:small_grids}
        \includegraphics[width=0.20\textwidth]{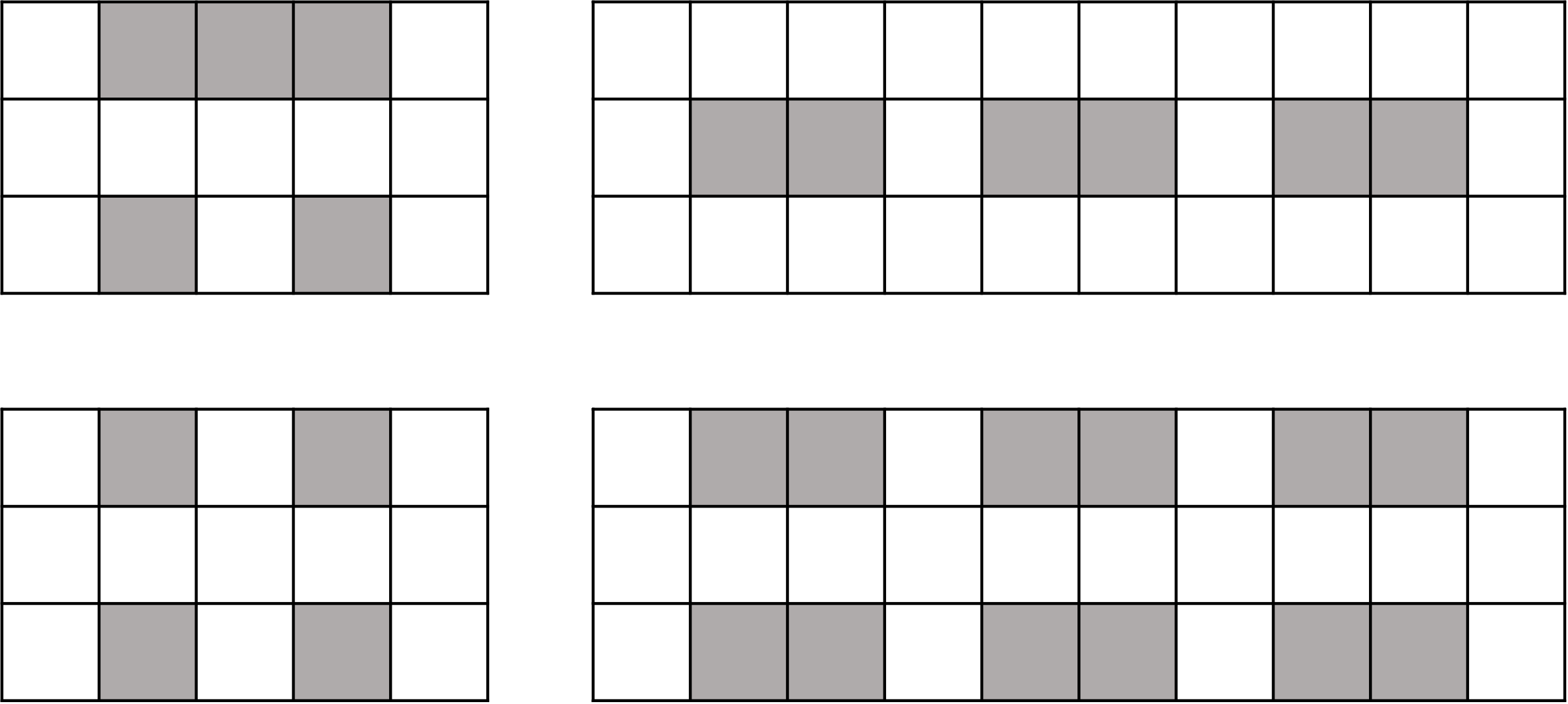}
    }\subfigure[Large grid map.]{
        \centering
        \label{fig:large_grids}
        \includegraphics[width=0.20\textwidth]{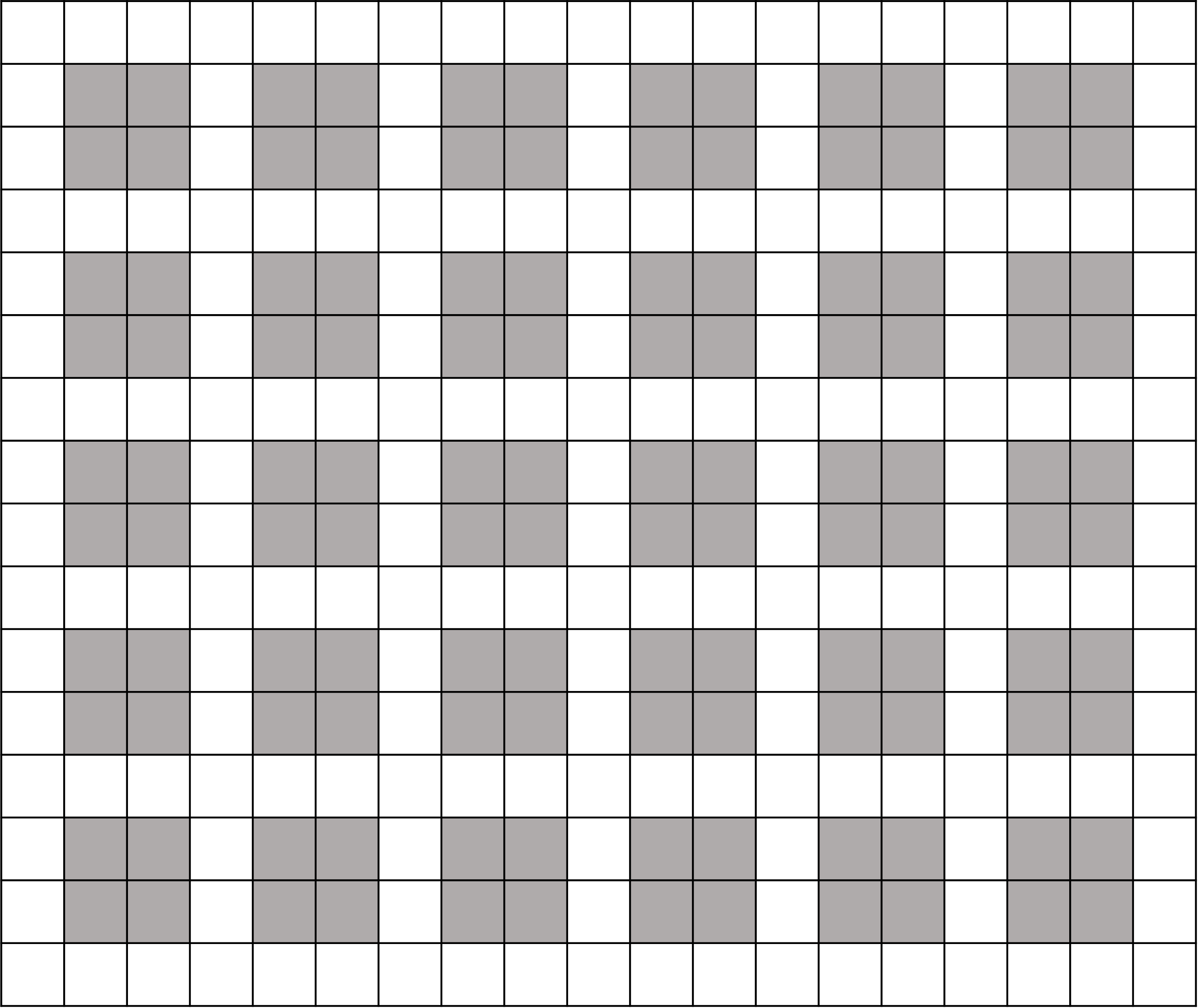}
    }\caption{Grids for experiments from \cite{vsvancara2019online}.}
\end{figure}

\section{Experiment}
%introduce the experiment design, dataset and experiment setup
The purpose of our experiments is to evaluate the computational efficiency of the proposed approach. 
Two 4-neighbor grid map datasets are used with settings similar to \cite{vsvancara2019online}, which is a small grid map dataset and a large grid map dataset, respectively. We perform online MAPF in these grid maps. Specifically, we move each of the agents from one cell to another. During this period, there will be new agents starting at any time.

% but the number of agents in the second dataset was adjusted to be larger for better evaluation

We use success rate and average running time as the metrics. The running time limit is 30 seconds. If the algorithm run exceeds the time limit, the experimental instance is unsuccessful and uses the time limit as the running time. We make statistics based on the number of agents. For each number of agents, it has 100 instances in both datasets.

The experiments assume no agents are in the scene at the beginning. On the one hand, the offline parts of algorithms are the same. If some instances fail in the offline part, the online algorithm is not executed. These test cases are useless for comparison. All randomly generated instances are always solvable if there are no agents in the scene in the beginning. On the other hand, all compared methods use identical offline MAPF solvers. Ignoring it does not affect the comparison. 

% We present our experimental result of our propose approach in two 4-neighbor grid datasets. The grid maps are the same as \cite{vsvancara2019online}. 
We run the algorithms on an AMD R7-5800X CPU with 4.40 GHz and 16GB RAM. Four baselines are selected for comparison:
%introduce baselines
\begin{itemize}
\item \textbf{RA+CBS+A*(A1)}: This algorithm uses the Replan All algorithm to solve the online MAPF problem. When new agents start, it uses the CBS algorithm to calculate the multi-agent path plan, whose low-level solver is A*.

\item \textbf{RA+CBS+RSIPP(A2)}: This algorithm removes all sustainable operations from our proposed approach, i.e., all solvers do not maintain or use the planning context.

\item \textbf{SR+SCBS+RSIPP(A3)}: For this algorithm, the low-level solver can not reuse the planning context. In the middle-level solver, if the agent strictly follows the previous path in the node of the conflict tree, the low-level solver won't be called, and a suffix from the results of the previous planning is used as the results of this planning.

\item \textbf{SR+SCBS+SRSIPP(A4)}: The proposed method.
\end{itemize}
%introduce the metric

\subsection{Small Grid Map}
In the first dataset, 4 small grid maps are used, as shown in Figure \ref{fig:small_grids}. The start and goal points are randomly sampled in two cells of the opposing sides of the grids, and the start time point is uniformly sampled from $[1, 30]$. Let $k$ be the number of agents, which is in the range $\{10, 12, 15, 17, 20, 22, 25\}$. In each grid map, we generate 25 experimental instances for each 
setting of the agent number. For each number of agents, it has $25*4=100$ instances.

Table \ref{table:small_grid_succ} shows the result of success rate. Except when the agent number is 17, the success rate of A4 is larger than or equal to A1. Table \ref{table:small_grid} shows the running time of all baselines and the speedup ratio relative to A1 of other baselines. In the result, A2 did not obtain significant improvement, while A3 can speed up in most cases. The best performance comes from A4. It shows an improvement in average runtime relative to A1 under all agent number settings and achieves the maximum speedup ratio in all baselines.
% However, in this metric, the gap between different algorithms is not obvious. \hj{this sentence could be deleted}

\begin{table}[t]
\centering
\small
\begin{tabular}{c|cccc}
$k$ &A1 & A2 &A3 & A4\\\hline
10&\textbf{96\%} &\textbf{96\%} &\textbf{96\% }&\textbf{96\%}  \\ 
12&\textbf{87\%} &84\% &86\% &\textbf{87\%}  \\ 
15&78\% &74\% &79\% &\textbf{80\%}  \\ 
17&\textbf{65\%} &63\% &63\% &64\%  \\ 
20&\textbf{49\%} &47\% &47\% &\textbf{49\%}  \\ 
22&36\% &35\% &37\% &\textbf{37\%}  \\ 
25&27\% &26\% &28\% &\textbf{29\%}  \\ 
\end{tabular}
\caption{Table of the success rate in small grids. $k$ is the agent number.}
\label{table:small_grid_succ}
\end{table}

\begin{table}[t]
\centering
\small
\begin{tabular}{c|cccc}
$k$ &A1 & A2 &A3 & A4\\\hline
10&1.68(-)&2.2(0.77)&1.72(0.98)&\textbf{1.58(1.06)} \\ 
12&5.65(-)&5.96(0.95)&5.35(1.06)&\textbf{4.77(1.18)} \\ 
15&8.05(-)&8.67(0.93)&7.96(1.01)&\textbf{7.27(1.11)} \\ 
17&12.26(-)&12.61(0.97)&11.96(1.03)&\textbf{11.54(1.06)} \\ 
20&17.01(-)&17.18(0.99)&16.6(1.02)&\textbf{16.23(1.05)} \\ 
22&21.08(-)&22.02(0.96)&20.83(1.01)&\textbf{20.15(1.05)} \\ 
25&22.53(-)&22.39(1.01)&22.15(1.02)&\textbf{21.9(1.03)} \\ 
\end{tabular}
\caption{Table of the running time and the speedup ratio in small grids. The first number in the cell shows the running time(sec), and the number in parentheses indicates the speedup relative to A1. $k$ is the agent number.}
\label{table:small_grid}
\end{table}

\begin{table}[t]
\centering
\small
\begin{tabular}{c|cccc}
$k$ &A1 & A2 &A3 & A4\\\hline
90&84\% &85\% &88\% &\textbf{91\%}  \\ 
92&82\% &84\% &86\% &\textbf{89\%}  \\ 
94&90\% &89\% &91\% &\textbf{92\%}  \\ 
96&75\% &77\% &81\% &\textbf{84\%}  \\ 
98&72\% &76\% &82\% &\textbf{83\%}  \\ 
100&54\% &69\% &72\% &\textbf{77\%}  \\ 
\end{tabular}
\caption{Table of the success rate in large grids. $k$ is the agent number.}
\label{table:large_grid_succ}
\end{table}

\begin{table}[ht]
\centering
\small
\begin{tabular}{c|cccc}
$k$ &A1 & A2 &A3 & A4\\\hline
90&10.09(-)&8.98(1.12)&7.48(1.35)&\textbf{6.06(1.67)} \\ 
92&9.69(-)&9.61(1.01)&8.13(1.19)&\textbf{6.8(1.43)} \\ 
94&8.04(-)&8.27(0.97)&6.59(1.22)&\textbf{5.52(1.46)} \\ 
96&13.05(-)&13.35(0.98)&11.25(1.16)&\textbf{9.84(1.33)} \\ 
98&13.37(-)&12.4(1.08)&10.5(1.27)&\textbf{9.21(1.45)} \\ 
100&18.06(-)&14.78(1.22)&13.02(1.39)&\textbf{11.58(1.56)} \\ 
\end{tabular}
\caption{Table of the running time and the speedup ratio in large grids. The first number in the cell shows the running time(sec), and the number in parentheses indicates the speedup relative to A1. $k$ is the agent number.}
\label{table:large_grid}
\end{table}

\subsection{Large Grid Map}
In the second dataset, we use the large grid map shown in Figure \ref{fig:large_grids}. We generate $100$ experimental instances for each 
setting of the agent number $k$ in the range of $[90, 100]$. The start time points are randomly sampled from $[1, 100]$. An agent's start point and goal point are sampled from two different sides.

Table \ref{table:large_grid_succ} and Table \ref{table:large_grid} show the result of the success rate, the average running time, and the speedup ratio relative to A1 of different baselines. A2 and A3 perform better than A1 in most cases, while A4 runs the fastest among all methods. In all instances, the average speedup ratio of A4 relative to A1 reaches $1.48$. 

The running time of $92$ and $94$ agents are shorter than $90$ agents. It is because when the number of agents is large, the running time is not significantly influenced by the small increment of the agent number but is dominated by some hard-to-solve cases.

\subsection{Discussion}
% The improvement in the large grids are more obvious than in the small grids. We suppose that it is because of the path length. In the small grids, the path of agents is usually short. Using A* can compute the solution in very few steps. 
% For A2, the benefit of using SIPP is that it cannot override additional computing time for processing the time interval. A3 and A4 accelerate the computation but are not too prominent. It is because the path is short, and sustainable information can only be used a few times. Therefore, the acceleration effect is respectively limited.

The improvement in the large grids is more obvious than in the small grids. We believe it is because of the path length. In the small grid maps, the path of agents is short. Sustainable information can only be used a few times. In the large map, the path is much longer. Sustainable information is used more frequently, making A4 get a higher speedup ratio. 

\section{Conclusion}
We proposed a three-level algorithm to solve the online MAPF problem. Three levels are responsible for simulating the multi-agent online environment, solving the multi-agent path planning, and using the historical planning information to assist in solving the single-agent path planning. We proved the completeness and the snapshot optimality of our approach. The experiment shows that our proposed method runs faster than the SOTA algorithm. During the experiment, we also found that the performance in large grids is much better than in small grids. This is because the agent has a longer path in a larger grid so that the planning context can be reused more times. It shows that the longer the path, the better the acceleration effect of our method.
% The Sustainable Replan algorithm algorithm is an snapshot optimal algorithm to maintain the planning context. the Sustainable Conflict-Based Search algorithm algorithm builds a conflict tree and manages the individual planning context for the agent and constraint. The Sustainable Reverse Safe Interval Path Planning algorithm algorithm uses the information from the previous search to reduce redundant calculations and improve computing efficiency. 

In the future, more aspects of using sustainable information, such as building the conflict tree, will be explored to improve the efficiency of the algorithm further.

\bibliography{aaai23}

\end{document}